\theoremstyle{plain}
\newtheorem{lemma}{Lemma}
\newtheorem{definition}{Definition}
\newcommand{\blind}{0}
\begin{document}
  \title{Principal Graph Encoder Embedding and Principal Community Detection}
  \author{Cencheng Shen, Yuexiao Dong, Carey E. Priebe, Jonathan Larson, Ha Trinh, Youngser Park
  \thanks{
  \IEEEcompsocthanksitem Cencheng Shen is with the Department of Applied Economics and Statistics, University of Delaware. E-mail: shenc@udel.edu \protect
  \IEEEcompsocthanksitem Yuexiao Dong is with the Department of Statistics, Operations, and Data Science, Temple University. E-mail: ydong@temple.edu \protect
  \IEEEcompsocthanksitem Jonathan Larson and Ha Trinh
are with Microsoft Research at Redmond WA. E-mail: jolarso@microsoft.com, trinhha@microsoft.com \protect
  \IEEEcompsocthanksitem Carey E.Priebe and Youngser Park
are with the Department of Applied Mathematics and Statistics (AMS), the Center for Imaging Science (CIS), and the Mathematical Institute for Data Science (MINDS), Johns Hopkins University. E-mail: cep@jhu.edu, youngser@jhu.edu \protect
}
\thanks{This work was supported in part by 
the National Science Foundation HDR TRIPODS 1934979, 
the National Science Foundation DMS-2113099, and by funding from Microsoft Research.
}}

\maketitle

\begin{abstract}
In this paper, we introduce the concept of principal communities and propose a principal graph encoder embedding method that concurrently detects these communities and achieves vertex embedding. Given a graph adjacency matrix with vertex labels, the method computes a sample community score for each community, ranking them to measure community importance and estimate a set of principal communities. The method then produces a vertex embedding by retaining only the dimensions corresponding to these principal communities. Theoretically, we define the population version of the encoder embedding and the community score based on a random Bernoulli graph distribution. We prove that the population principal graph encoder embedding preserves the conditional density of the vertex labels and that the population community score successfully distinguishes the principal communities. We conduct a variety of simulations to demonstrate the finite-sample accuracy in detecting ground-truth principal communities, as well as the advantages in embedding visualization and subsequent vertex classification. The method is further applied to a set of real-world graphs, showcasing its numerical advantages, including robustness to label noise and computational scalability.
\end{abstract}

\begin{IEEEkeywords}
Graph Embedding, Dimension Reduction, Random Graph Model
\end{IEEEkeywords}

\section{Introduction}

\IEEEPARstart{G}{raph} data has become increasingly popular over the past two decades. It plays a pivotal role in modeling relationships between entities across a wide array of domains, including social networks, communication networks, webpage hyperlinks, and biological systems \cite{GirvanNewman2002, newman2003structure, barabasi2004network, boccaletti2006complex, VarchneyEtAl2011, ugander2011anatomy}. Given $n$ vertices and $s$ edges, a binary graph can be represented by an adjacency matrix $\mathbf{A} \in \{0,1\}^{n \times n}$, where $\mathbf{A}(i,j)=1$ means there exists an edge between vertex $i$ and vertex $j$, and $0$ otherwise. The high dimensionality of graph data, dictated by the number of vertices, often necessitates dimension reduction techniques for subsequent inferences.

Dimension reduction techniques applied to graph data are commonly referred to as graph embedding. Specifically, graph embedding transforms the adjacency matrix into a low-dimensional Euclidean representation per vertex. While many such techniques exist, two popular and theoretically sound methods are spectral embedding \cite{Priebe2019} and node2vec \cite{grover2016node2vec}, with asymptotic theoretical guarantees such as convergence to the latent position \cite{SussmanEtAl2012} and consistency in community recovery \cite{zhang2024theoretical}, under popular random graph models such as the stochastic block model and random dot graph model \cite{KarrerNewman2011,ZhaoLevinaZhu2012,JMLR:v18:17-448}. The resulting vertex embeddings facilitate a wide range of downstream inference tasks, such as community detection \cite{RoheEtAl2011, gallagher2023spectral}, vertex classification \cite{TangSussmanPriebe2013, mehta2021neuronal}, and the analysis of multiple graphs and time-series data \cite{arroyo2021inference, Patrick2021}.

The scalability of spectral embedding is often a bottleneck due to its use of singular value decomposition, which can be time-consuming for moderate to large graphs. When vertex labels are available for at least part of the vertex set, a recent method called one-hot graph encoder embedding \cite{GEE1}, which can be viewed as a supervised version of spectral embedding, is significantly faster yet shares similar theoretical properties, such as convergence to the latent positions. It also has several applications to weighted, multiple, and dynamic graphs \cite{GEEFusion, GEEDynamics, GEEDistance, GEERefine, GEECorr}, often exhibiting significantly better finite-sample performance over spectral embedding with a fraction of the time required.

Building upon the one-hot graph encoder embedding, this paper proposes a principal graph encoder embedding algorithm. The key addition is the introduction of a sample community score that ranks the importance of each community. The community score is then used to estimate a set of principal communities that contribute to the decision boundary for separating vertices of different communities. Due to the duality of community and dimensionality in the encoder embedding, the principal graph encoder embedding achieves further dimension reduction by restricting the embedding to the dimensions corresponding to the principal communities. The proposed algorithm maintains the same computational complexity as the original encoder embedding, making it significantly faster than other graph embedding techniques. Additionally, the reduced dimensionality enhances both the speed and robustness of subsequent inference, particularly in the presence of a large number of redundant or noisy communities.

To theoretically justify the sample algorithm, we provide a population characterization of the encoder embedding and principal communities. We prove, under a random Bernoulli graph model, that the principal graph encoder embedding preserves the conditional density of the label vector, making the proposed method Bayes optimal for vertex classification. Furthermore, under a regularity condition, we demonstrate that the proposed sample community score converges to a population community score, which equals zero if and only if the corresponding community is not a principal community.

Through comprehensive simulations and real-data experiments, we validate the numerical performance and theoretical findings through embedding visualization, ground-truth principal community detection, and vertex classification. The proposed method demonstrates excellent numerical accuracy, computational scalability, and robustness against noisy data. Theorem proofs are provided in the appendix. \if0\blind
{The code and data are available on GitHub\footnote{\url{https://github.com/cshen6/GraphEmd}}.} \fi

\section{The Main Method}
\label{sec3}

In this section, we present the principal graph encoder embedding method for a given sample graph, followed by discussions on several practical issues such as normalization, sample community score threshold, and label availability.

\subsection{Principal Graph Encoder Embedding}
\label{sec31}
\begin{itemize}
\item \textbf{Input}: The graph adjacency matrix $\mathbf{A} \in \{0,1\}^{n \times n}$ and a label vector $\mathbf{Y} \in \{0,1,\ldots,K\}^{n}$, where $1$ to $K$ represent known labels, and $0$ is a dummy category for vertices with unknown labels.
\item \textbf{Step 1}: Compute the number of known observations per class, i.e., 
\begin{align*}
n_k = \sum_{i=1}^{n} 1(\mathbf{Y}(i)=k)
\end{align*}
for $k=1,\ldots,K$. 
\item \textbf{Step 2}: Compute the matrix $\mathbf{W} \in [0,1]^{n \times K}$ as follow: for each vertex $i=1,\ldots,n$, set
\begin{align*}
\mathbf{W}(i, k) = 1 / n_k
\end{align*} 
if and only if $\mathbf{Y}(i)=k$, and $0$ otherwise. Note that vertices with unknown labels are effectively assigned zero values, i.e., $\mathbf{W}(i, :)$ is a zero vector if $\mathbf{Y}(i)=0$.
\item \textbf{Step 3}: Compute the original graph encoder embedding through matrix multiplication:
\begin{align*}
\mathbf{Z}=\mathbf{A} \mathbf{W} \in [0,1]^{n \times K}.
\end{align*}
\item \textbf{Step 4 (Normalization)}: Given $\mathbf{Z}$ from step 3, for each $i$ where $\|\mathbf{Z}(i, \cdot)\| > 0$, update the embedding as follows:
\begin{align*}
\mathbf{Z}(i, \cdot) = \frac{\mathbf{Z}(i, \cdot)}{\|\mathbf{Z}(i, \cdot)\|}.
\end{align*}
\item \textbf{Step 5 (Sample Community Score)}: Based on $\mathbf{Z}$ in step 4, 
for each $k \in [1,K]$, compute the sample community score as follows:
\begin{align*}
\hat{\lambda}(k)&= \frac{\max_{l=1,\ldots,K}\{\hat{\mu}(k|l)\} - \min_{l=1,\ldots,K}\{\hat{\mu}(k|l)\}}{\max_{l=1,\ldots,K}\{\hat{\sigma}(\mathbf{\tilde{Z}}(k|l))\}},
\end{align*}
where
\begin{align*}
\hat{\mu}(k|l)&=\frac{\sum_{i=1,\ldots,n}^{\mathbf{Y}(i)=l}\mathbf{Z}(i,k)}{n_l}, \\\hat{\sigma}^2(k|l)&=\frac{\sum_{i=1,\ldots,n}^{\mathbf{Y}(i)=l}\mathbf{Z}^2(i,k)}{n_l-1}-\hat{\mu}^2(k|l),
\end{align*}
then set the estimated principal communities as $\hat{D}=\{k \in [1,K] \mbox{ and } \hat{\lambda}(k)>\epsilon\}$ for a positive threshold $\epsilon$. The choice of $\epsilon$ is discussed in later subsection.
\item \textbf{Step 6 (Principal Encoder)}: Denote the embedding limited to $\hat{D}$ in $\mathbf{Z}$ as $\mathbf{Z}^{\hat{D}}$. Then re-normalize each vertex embedding, i.e., for each $i$, set
\begin{align*}
\mathbf{Z}^{\hat{D}}(i, \cdot)=\frac{\mathbf{Z}(i, \hat{D})}{\|\mathbf{Z}(i, \hat{D})\|},
\end{align*}
\item \textbf{Output}: The original graph encoder embedding $\mathbf{Z}$, the principal graph encoder embedding $\mathbf{Z}^{\hat{D}}$, the sample community score $\{\hat{\lambda}(k)\}$, and the estimated set of principal communities $\hat{D}$. 
\end{itemize}

Note that $\mathbf{Z}^{\hat{D}}$ does not remove any observations from the embedding; rather, it only removes the $k$th dimension when community $k$ is not a principal community, i.e., $\mathbf{Z}^{\hat{D}} \in \mathbb{R}^{n \times |\hat{D}|}$. Every vertex, whether it is from a principal community or not, is always present in the final embedding $\mathbf{Z}^{\hat{D}}$. A brief flowchart of the main method is provided in Figure~\ref{fig0}.

\begin{figure*}[ht]
	\centering
	\includegraphics[width=0.8\textwidth,trim={0cm 0cm 0cm 0cm},clip]{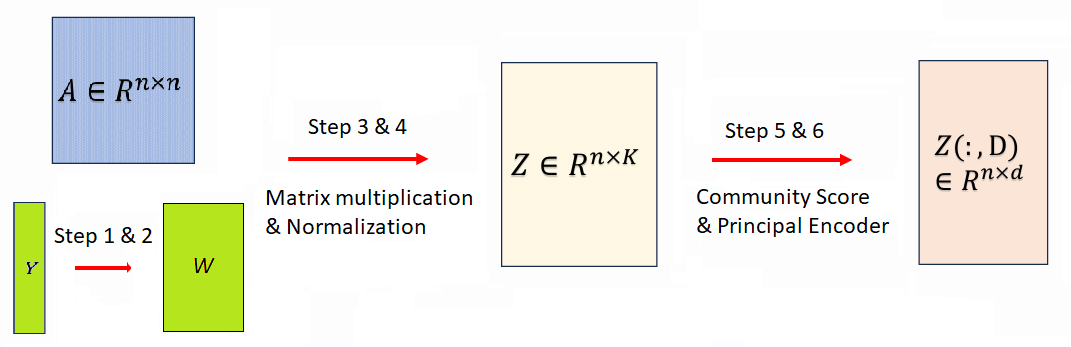}
	\caption{This flowchart illustrates the input, output, and intermediate steps of the principal graph encoder embedding.}
	\label{fig0}
\end{figure*}

\subsection{On Normalization}
Normalization ensures that all vertex embeddings have the same norm. In the context of graph encoder embedding, normalization projects the resulting sample embedding onto a unit sphere, which eliminates degree differences and often leads to improved separation among communities \cite{GEEClustering}. This is particularly beneficial for heterogeneous graphs, which are common in real-world data. In our case, normalization also ensures that the sample community score behaves well. 

\subsection{The Community Score}
The community score is designed to measure the importance of each community. Intuitively, in the original graph encoder embedding, the $k$th dimension can be interpreted as the average connectivity of the target vertex to all vertices in community $k$. Therefore, the proposed community score checks whether there is significant variability within dimension $k$, or equivalently, whether the connectivity from other communities to community $k$ is close to a constant or not. If the connectivity is almost the same, the numerator will be relatively small, or in the extreme case, simply zero, indicating that community $k$ has no information in separating other communities.

A practical question is how large the score should be for a community to qualify as principal. One possible approach is to rank the scores and decide a cut-off via cross-validation. In the presented algorithm, we opt for a faster approach using an adaptive threshold $\epsilon$ for cut-off. To determine this threshold, we employ the profile likelihood method from \cite{ZhuGhodsi2006}, a popular technique for selecting an elbow threshold given a vector. We choose the third elbow of all sample community scores, denoting it as $\epsilon_{\mathbf{A}}$, and then set $\epsilon=\max\{\epsilon_{\mathbf{A}},0.7\}$.

Empirically, the third elbow is very effective for large values of $n$ and $K$. However, for smaller to moderate values of $(n,K)$, the third elbow may be overly conservative. Through experimentation across various models and real datasets, it has been observed that principal communities typically have scores around or higher than 1, while redundant communities tend to have scores of no more than 0.5 for small $(n,K)$. As a result, we settled on the maximum of the third elbow and 0.7 as an empirical choice for $\epsilon$.

\subsection{On Label Vector}
Note that the given algorithm assumes knowledge of the label vector $\mathbf{Y} \in \{0,1,\ldots,K\}^{n}$, which is at least partially known, where $0$ denotes the dummy category of unknown labels. In the absence of a label vector, one could either use a random initialization and k-means to estimate the ground-truth labels \cite{GEEClustering}, or employ a direct label estimation algorithm such as Louvain, Leiden, or label propagation \cite{Louvain2008,Leiden2019,traag2011narrow,raghavan2007near} to estimate a label vector directly from the graph. In either scenario, one can compute the community score and principal encoder accordingly for any estimated label vector. The meaning of principal communities will pertain to the estimated labels, and the population theories in the next section still apply. Therefore, it suffices to assume a given label vector for the purpose of this paper, regardless of whether the label vector is ground-truth or estimated from some algorithms.

\subsection{Computational Complexity}
The computational complexity of the principal graph encoder embedding (P-GEE) is the same as the original graph encoder embedding (GEE), which is $O(nK + s)$, where $s$ represents the number of edges \cite{GEE1,GEEDynamics}. This is because neither the normalization nor the community score computation increases the overall complexity.

For instance, the method is capable of embedding a graph with $100,000$ vertices, $40$ classes, and $10$ million edges in under $10$ seconds on a standard computer using MATLAB code. While the additional steps 5-6 in P-GEE may make it marginally slower than GEE, the reduced dimensionality of $\mathbf{Z}^{\hat{D}}$ can actually enhance its speed and scalability for subsequent tasks such as vertex classification. This is demonstrated in our real data experiments.

\section{Population Definition and Supporting Theory}
In this section, we characterize the population behavior of the method on random graph models. We begin by reviewing several popular random graph models, followed by the introduction of a random graph variable. We then define the population version of the principal community and the graph encoder embedding for this graph variable. This framework allows us to prove that the principal graph encoder embedding preserves the conditional density of the label vector. Additionally, we demonstrate that the sample community score converges to a population community score, which, under a regularity condition, equals zero if and only if the corresponding community is not a principal community. It is important to note that while the other sections focus on the sample method applied to sample graphs, everything in this section pertains to the population version of the method.

\subsection{Existing Random Graph Models}

\subsubsection*{The Stochastic Block Model}
The standard stochastic block model (SBM) is a widely used graph model known for its simplicity and ability to capture community structures \cite{HollandEtAl1983, SnijdersNowicki1997, KarrerNewman2011}. Under SBM, each vertex $i$ is first assigned a class label $\mathbf{Y}(i) \in \{1,\ldots, K\}$. This label can either be predetermined or assumed to follow a categorical distribution with prior probabilities $\{\pi_k \in (0,1], \sum_{k=1}^{K} \pi_k = 1\}$.

Given the vertex labels, the model independently generates each edge between vertex $i$ and another vertex $j \neq i$ using a Bernoulli random variable:
\begin{align*}
\mathbf{A}(i,j) &\sim \mbox{Bernoulli}(B(\mathbf{Y}(i), \mathbf{Y}(j))). 
\end{align*}
Here, $B=[B(k,l)] \in [0,1]^{K \times K}$ represents the block probability matrix, which serves as the parameters of the model. In a directed graph, the lower diagonal of the adjacency matrix is generated using the same distribution, while in an undirected graph, the lower diagonals are set to be equal to the upper diagonals. Note that the model does not have self-loops, meaning that $\mathbf{A}(i,i)=0$. Additionally, whether the graph is directed or undirected does not affect the results here. 

\subsubsection*{The Degree-Corrected Stochastic Block Model}
The standard stochastic block model (SBM) generates dense graphs where all vertices within the same class have the same expected degrees. However, many real-world graphs are heterogeneous, with different vertices having varying degrees, and the graph can be very sparse. To accommodate this, the degree-corrected stochastic block model (DC-SBM) was introduced as an extension of SBM \cite{ZhaoLevinaZhu2012}.

In addition to the existing parameters of SBM, DC-SBM assigns a non-negative and bounded degree parameter $\theta_i$ to each vertex $i$. Given these degrees, the edge between vertex $i$ and another vertex $j \neq i$ is independently generated by:
\begin{align*}
\mathbf{A}(i,j) \sim \mbox{Bernoulli}(\theta_i \theta_j B(\mathbf{Y}(i), \mathbf{Y}(j))).
\end{align*}
When all degrees are set to 1, DC-SBM reduces to the standard SBM. Typically, degrees may be assumed to be fixed a priori or independently and identically distributed within each community. These degree parameters allow DC-SBM to better approximate real-world graphs. 

\subsubsection*{The Random Dot Product Graph}

Under the random dot product graph (RDPG), each vertex $i$ is associated with a hidden latent variable $U_i \stackrel{i.i.d.}{\sim} f_U \in \mathbb{R}^{m}$ \cite{YoungScheinerman2007,JMLR:v18:17-448}. Then each edge is independently generated as follows:
\begin{align*}
\mathbf{A}(i,j) &\sim \mbox{Bernoulli}(<U_i, U_j>),
\end{align*}
where $<\cdot, \cdot>$ denotes the inner product. To enable communities under RDPG, it suffices to assume the latent variable follows a K-component mixture distribution. In other words, each vertex is associated with a class label $\mathbf{Y}(i)$ such that
\begin{align*}
U_i | (\mathbf{Y}(i)=k) &\stackrel{i.i.d.}{\sim} f_{U|k}.
\end{align*}

\subsection{Defining a Graph Variable}

To characterize the graph embedding using a framework similar to the conventional setup of predictor and response variables, we formulate the above graph models into the following graph variable, called the random Bernoulli graph distribution.

Given a vertex, we assume $Y$ is the underlying label that follows a categorical distribution with prior probabilities $\{\pi_k \in (0,1], \sum_{k=1}^{K} \pi_k = 1\}$. Additionally, $X \in \mathbb{R}^{p}$ is the latent variable with a K-component mixture distribution, denoted as
\begin{align*}
X \sim \sum_{k=1}^{K} \pi_k f_{X|Y=k},
\end{align*}
where $f_{X|Y=k}$ represents the conditional density.

Moreover, we assume a known label vector $\mathcal{\vec{V}}=\{v_1, v_2, \ldots, v_m\} \in [1,K]^{m}$, where each $k \in [1,K]$ is present in $\mathcal{\vec{V}}$. Additionally, there exists a corresponding random matrix
\begin{align*}
\mathcal{\vec{U}} &= [U_1;U_2;\cdots;U_m] \in \mathbb{R}^{m \times p},
\end{align*}
where each $U_j$ is independently distributed with density $f_{X|Y=v_j}$.

\begin{definition}
We define an $m$-dimensional random variable $A$ following the random Bernoulli graph distribution as
\begin{align*}
A \sim \mbox{RBG}(X, \mathcal{\vec{V}},\delta) \in \{0,1\}^{m},
\end{align*}
if and only if each dimension $A_j$ is distributed as
\begin{align*}
A_j \sim \mbox{Bernoulli}(\delta(X, U_j)), j=1,\ldots,m.
\end{align*}
Here, $\delta(\cdot, \cdot): \mathbb{R}^p \times \mathbb{R}^p \rightarrow [0,1]$ can be any deterministic function, such as weighted inner product or kernel function. 
\end{definition}
Note that $\mathcal{\vec{V}}$ is a known vector. Alternatively, one could view it as independent sample realizations using the same categorical distribution of $Y$. As we have required each integer from $1$ to $K$ to be present in $\mathcal{\vec{V}}$, it necessarily implies $m \geq K$. 

In essence, the random Bernoulli graph distribution is a multivariate concatenation of mixture Bernoulli distributions. In this distribution, the probability of each Bernoulli trial is determined by a function involving the latent variable $X$ and an independent copy $U_j$ with a known label $v_j$. The random Bernoulli graph distribution is a versatile framework encompassing SBM, DC-SBM, RDPG, and more general cases, due to its flexibility in allowing any $\delta(\cdot, \cdot)$ and any particular distribution for $X$.

Consider the sample adjacency matrix and the labels $(\mathbf{A},\mathbf{Y}) \in \{0,1\}^{n \times n} \times \{1,2,\ldots,K\}^{n}$ as an example, where the graph has no self-loop. Then, for each $i=1,\ldots,n$, the $i$th row of $\mathbf{A}$ is distributed as
\begin{align*}
\mathbf{A}(i,:) \sim \mbox{RBG}(X,\mathcal{\vec{V}},\delta),
\end{align*}
where $X$ is the underlying latent variable for vertex $i$, and $\mathcal{\vec{V}}$ is the known sample labels of all other vertices. Note that the dimension $m=n-1$ because $\mathbf{A}(i,i)=0$, and it suffices to consider the edges between vertex $i$ and all other vertices.

\subsection{The Principal Graph Encoder Embedding for the Graph Variable}
In this section, we characterize the population version of the original graph encoder embedding, the principal community, and the principal graph encoder embedding on the graph variable. Note that their sample notations are $\mathbf{Z}$, $\hat{D}$, and $\mathbf{Z}^{\hat{D}}$ respectively in Section~\ref{sec3}, and the corresponding population notations are $Z$, $D$, and $Z^{D}$ respectively in this section.

\begin{definition}
Given a random graph variable $A \sim \mbox{RBG}(X, \mathcal{\vec{V}},\delta)$. For each $k=1,\ldots,K$, calculate
\begin{align*}
m_k=\sum\limits_{j=1}^{m} 1(v_j=k),
\end{align*}
where $1(v_j=k)$ equals $1$ if $v_j=k$, and $0$ otherwise. 

We then compute the matrix $W \in \mathbb{R}^{m \times K}$ as follows:
\begin{align*}
&W(i,j) = \begin{cases}
1/ m_k     & \quad \text{when $v_j=k$},\\
0  & \quad \text{otherwise.}
\end{cases}
\end{align*}
The population graph encoder embedding is then defined as $Z = A W \in [0,1]^{K}$.
\end{definition}
Note that the $W$ matrix is conceptually similar to the one-hot encoding scheme, except the entries are normalized rather than binary. Next, we introduce the concepts of principal and redundant communities for the graph variable:

\begin{definition}
Given $A \sim \mbox{RBG}(X, \mathcal{\vec{V}}, \delta)$, and $U^{k}$ as an independent variable distributed as $f_{X|Y=k}$. A community $k$ is defined as a principal community if and only if
\begin{align*}
Var\left(E(\delta(X, U^{k}) \mid X)\right) > 0.
\end{align*}
On the other hand, any community for which the above variance equals 0 is referred to as a redundant community.
\end{definition}
For example, in the stochastic block model, the condition $Var(E(\delta(X, U^{k}) \mid X)) = 0$ is equivalent to the $k$th column of the block probability matrix $B(:,k)$ being a constant vector, which does not provide any information about $Y$ via the edge probability. Finally, the principal graph encoder embedding can be defined as follows:
\begin{definition}
Define $D$ as the set of principal communities, and $Z^{D}$ as the graph encoder embedding whose dimensions are restricted to the indices in $D$. We call $Z^{D}$ the principal graph encoder embedding.
\end{definition}
For example, if $K=5$ and $D=\{1,2\}$, then $Z$ spans five dimensions while $Z^{D}$ only keeps the first two dimensions from $Z$. The principal graph encoder embedding achieves additional dimension reduction compared to the original graph encoder embedding. Given the population definitions, the sample versions $\mathbf{Z}$, $\hat{D}$, and $\mathbf{Z}^{\hat{D}}$ in Section~\ref{sec3} can be viewed as sample estimates for the population counterparts $Z$, $D$, and $Z^{D}$.

\subsection{Conditional Density Preserving Property}
Based on the population setting, we can prove the principal graph encoder embedding preserves the conditional density, and as a result, preserves the Bayes optimal classification error via the classical pattern recognition framework \cite{DevroyeGyorfiLugosiBook}.

\begin{restatable}{theorem}{thmOne}
\label{thm1}
Given $A \sim \mbox{RBG}(X, \mathcal{\vec{V}},\delta)$, the principal graph encoder embedding preserves the following conditional density:
\begin{align*}
Y | A \stackrel{dist}{=} Y | Z \stackrel{dist}{=} Y | Z^{D}.
\end{align*}

Denote $L^{*}(Y,A)$ as the Bayes optimal error to classify $Y$ using $A$, we have 
\begin{align*}
L^{*}(Y,A) = L^{*}(Y,Z) = L^{*}(Y,Z^{D}).
\end{align*}
\end{restatable}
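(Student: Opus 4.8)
The plan is to show that the conditional law of $Y$ given $A$ depends on $A$ only through $Z=AW$, and in fact only through the principal coordinates $Z^{D}$; the Bayes-error equalities are then immediate, since the Bayes risk is a functional of the conditional law of the label.

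First I would establish a sufficiency/factorization structure for $A$. Conditioning on $X=x$ and integrating out the independent rows $U_1,\dots,U_m$ of $\mathcal{\vec{U}}$, the coordinates $\{A_j\}$ are conditionally independent given $X$, with $A_j\mid X=x \sim \mathrm{Bernoulli}(\bar\delta_k(x))$ whenever $v_j=k$, where $\bar\delta_k(x):=E(\delta(x,U^{k}))=E(\delta(X,U^{k})\mid X=x)$. Since $A$ is discrete, its conditional pmf therefore factorizes by blocks,
\begin{align*}
P(A=a\mid X=x) = \prod_{k=1}^{K}\bar\delta_k(x)^{\,m_k z_k}\bigl(1-\bar\delta_k(x)\bigr)^{\,m_k(1-z_k)},
\end{align*}
where $z=aW$ and $m_k z_k=\sum_{j:v_j=k}a_j$; the right-hand side depends on $a$ only through $z$, so write it as $g(z,x)$. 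Integrating against $f_{X\mid Y=l}$ gives $P(A=a\mid Y=l)=\int g(z,x)\,f_{X\mid Y=l}(x)\,dx=:h_l(z)$, again a function of $z$ alone, and Bayes' rule yields $P(Y=l\mid A=a)=\pi_l h_l(z)\big/\sum_{l'}\pi_{l'}h_{l'}(z)$, which is $\sigma(Z)$-measurable. Because $Z=AW$ is a deterministic function of $A$, the tower property then gives $P(Y=l\mid A)=P(Y=l\mid Z)$ a.s. for every $l$, i.e.\ $Y\mid A\stackrel{dist}{=}Y\mid Z$.

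Next I would use the definition of a redundant community to discard the non-principal coordinates. For $k\notin D$ we have $Var(\bar\delta_k(X))=0$, hence $\bar\delta_k(X)=c_k$ for a constant $c_k$ almost surely under the marginal of $X$; since that marginal is the mixture $\sum_l\pi_l f_{X\mid Y=l}$ with all $\pi_l>0$, the equality also holds a.s.\ under each $f_{X\mid Y=l}$. Then in $g(z,x)$ the factors with $k\notin D$ equal $c_k^{\,m_k z_k}(1-c_k)^{\,m_k(1-z_k)}$, free of $x$, so they pull out of the integral: $h_l(z)=C(z)\,\tilde h_l(z^{D})$, where $C(z)=\prod_{k\notin D}c_k^{\,m_k z_k}(1-c_k)^{\,m_k(1-z_k)}$ does not depend on $l$ and $\tilde h_l$ depends on $z$ only through $z^{D}$. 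The factor $C(z)$ cancels in the Bayes ratio, so $P(Y=l\mid Z=z)$ is a function of $z^{D}$ only, and the same tower-property argument gives $Y\mid Z\stackrel{dist}{=}Y\mid Z^{D}$. Finally, $L^{*}(Y,A)=E\bigl[1-\max_l P(Y=l\mid A)\bigr]$, and since $\max_l P(Y=l\mid A)=\max_l P(Y=l\mid Z)=\max_l P(Y=l\mid Z^{D})$ as random variables by the a.s.\ equalities just proved, the three Bayes errors coincide.

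I expect the main obstacle to be the very first step: carefully justifying that, after marginalizing over the random matrix $\mathcal{\vec{U}}$, the entries $A_j$ are conditionally independent given $X$ with the stated block-constant Bernoulli parameters, and that the resulting conditional pmf is genuinely a function of $Z$ rather than merely a coarsening of $A$ (a data-processing inequality alone would only give $L^{*}(Y,A)\le L^{*}(Y,Z)\le L^{*}(Y,Z^{D})$, not equality). Once this factorization is in hand, the rest is bookkeeping with Bayes' rule, the tower property, and the redundant-community definition.
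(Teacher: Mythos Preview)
Your proposal is correct and follows essentially the same argument as the paper: both condition on $X$, integrate out $\mathcal{\vec{U}}$ to obtain a product of block-constant Bernoulli pmfs depending on $a$ only through the per-class sums $m_k z_k$, then use the redundant-community condition $\bar\delta_k(X)=c_k$ to strip off the $k\notin D$ factors from the Bayes posterior. The only cosmetic difference is that the paper additionally computes $f_{Z\mid Y}$ explicitly as a product of Binomials and shows $f_{Z\mid Y}=\bigl(\prod_k\binom{m_k}{m_kz_k}\bigr)f_{A\mid Y}$, whereas you go straight from ``$f_{A\mid Y}$ is a function of $z$'' to the tower property; both routes yield the same conclusion.
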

Intuitively, the graph variable $A$ is an $m$-dimensional multivariate concatenation of mixture Bernoulli distributions, the original graph encoder embedding $Z$ is a $K$-dimensional multivariate concatenation of mixture Binomial distributions, and the principal graph encoder embedding $Z^{D}$ discards every dimension in $Z$ whose Binomial mixture component is equivalent to a single Binomial. Note that this property is on the population level. For the sample version, we expect the property to hold for sufficiently large vertex size, rather than at any $n$, due to sample estimation variance. 

This theorem shows that the principal communities are well-defined, and retaining the dimensions corresponding to the principal communities is sufficient for subsequent vertex classification. While both the original graph encoder embedding and the principal graph encoder embedding are equivalent in population, the principal graph encoder embedding has fewer dimensions and therefore usually provides a finite-sample advantage in subsequent inference.

\subsection{Population Community Score}
While Theorem~\ref{thm1} establishes that the principal community is well-defined and preserves the conditional density, it remains to demonstrate that the proposed community score can effectively detect such principal communities. To this end, we first introduce the population community score:
\begin{definition}
Define
\begin{align*}
\lambda(k)&= \frac{\max_{l=1,\ldots,K}\{E(Z_k|Y=l)\}}{\max_{l=1,\ldots,K}\{\sqrt{Var(Z_k|Y=l)}\}} \\
&- \frac{\min_{l=1,\ldots,K}\{E(Z_k|Y=l)\}}{\max_{l=1,\ldots,K}\{\sqrt{Var(Z_k|Y=l)}\}} \in [0,+\infty)
\end{align*}
as the population community score for each community $k \in [1,K]$. 
\end{definition}
Since the sample community score used in Step 5 of Section~\ref{sec31} relies on the sample expectation and variance, which converge to their respective population counterparts, it follows immediately that
\begin{align*}
\hat{\lambda}(k) \stackrel{n \rightarrow \infty}{\rightarrow} \lambda(k).
\end{align*} 
In other words, the sample community score converges to the population community score.

The following theorem proves that, under a regularity condition, the population community score perfectly separates principal communities from redundant communities.
\begin{restatable}{theorem}{thmTwo}
\label{thm2}
Assume $A \sim \mbox{RBG}(X, \mathcal{\vec{V}})$, and $\delta(X,U^{k})|Y$ is independent of $X|Y$, which is satisfied under the stochastic block model. Then the population community score $\lambda(k)=0$ if and only if community $k$ is a redundant community. 
\end{restatable}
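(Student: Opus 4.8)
The plan is to reduce everything to two closed-form quantities: the within-community connectivity function $g_k(x) := E\big(\delta(x,U^k)\big)$ (expectation over $U^k \sim f_{X\mid Y=k}$) and the class-conditional means $h_k(l) := E\big(\delta(X,U^k)\mid Y=l\big)$, and then read off when the population score vanishes. Recall $Z_k = (AW)_k = m_k^{-1}\sum_{j:v_j=k} A_j$, where, conditionally on $X$, the variables $\{A_j : v_j = k\}$ are independent with common mean $E(A_j\mid X) = g_k(X)$, hence conditional variance $g_k(X)\big(1-g_k(X)\big)$ since each $A_j$ is $\{0,1\}$-valued. First I would record two identities that need no hypothesis: $E(Z_k\mid X) = g_k(X)$, so that $E(Z_k\mid Y=l) = h_k(l)$; and $Var(Z_k\mid X) = m_k^{-1} g_k(X)\big(1-g_k(X)\big)$.

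The decisive step is the conditional variance $Var(Z_k\mid Y=l)$. By the law of total variance it equals $E\big(Var(Z_k\mid X)\mid Y=l\big) + Var\big(g_k(X)\mid Y=l\big)$, and the regularity hypothesis is exactly what kills the second term. Since $U^k$ is drawn independently of $(X,Y)$ one always has $E\big(\delta(X,U^k)\mid X,Y\big) = g_k(X)$; but the assumption that $\delta(X,U^k)$ is conditionally independent of $X$ given $Y$ forces $E\big(\delta(X,U^k)\mid X,Y\big) = E\big(\delta(X,U^k)\mid Y\big) = h_k(Y)$. Equating the two gives $g_k(X) = h_k(Y)$ almost surely. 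Hence $Var\big(g_k(X)\mid Y=l\big) = 0$ and $Var(Z_k\mid Y=l) = m_k^{-1} h_k(l)\big(1-h_k(l)\big)$, so whenever this is positive for at least one $l$ the population score collapses to $\lambda(k) = \frac{\max_l h_k(l) - \min_l h_k(l)}{\max_l\sqrt{h_k(l)(1-h_k(l))/m_k}}$, which equals $0$ if and only if $h_k(\cdot)$ is constant on $\{1,\dots,K\}$.

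It then remains to identify ``$h_k$ constant'' with ``community $k$ redundant'', i.e.\ $Var\big(E(\delta(X,U^k)\mid X)\big) = Var(g_k(X)) = 0$. If $h_k\equiv c$ then $g_k(X) = h_k(Y) = c$ almost surely, so $Var(g_k(X)) = 0$; conversely, if $Var(g_k(X)) = 0$ then $g_k(X)$ equals some constant $c$ almost surely, whence $h_k(l) = E(g_k(X)\mid Y=l) = c$ for every $l$ because each class has positive prior $\pi_l > 0$. I would remark that the ``redundant $\Rightarrow \lambda(k)=0$'' direction needs no regularity assumption (from $g_k(X)=c$ a.s.\ one gets $E(Z_k\mid Y=l) = c$ directly), whereas the regularity condition is precisely what excludes the pathology it would otherwise allow --- a principal community whose $g_k(X)$ fluctuates within classes but averages to the same $h_k(l)$ in every class, producing a vanishing score. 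Finally I would dispose of the boundary case where the denominator of $\lambda(k)$ vanishes, which by the formula above means $h_k(l)\in\{0,1\}$ for every $l$: if $h_k$ is then constant the numerator also vanishes, community $k$ is redundant, and $\lambda(k)$ is read as $0$; if $h_k$ takes both values the numerator equals $1$, so $\lambda(k) = +\infty \neq 0$, and $g_k(X) = h_k(Y)$ is non-degenerate so community $k$ is principal. Either way the asserted equivalence holds.

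The main obstacle is the almost-sure identity $g_k(X) = h_k(Y)$: getting the two conditional expectations right --- one from the independence of the fresh copy $U^k$, the other from the regularity hypothesis --- and recognizing that this collapse of the ``signal'' part of $Z_k$ onto a function of $Y$ is the entire content of the theorem. The remaining pieces --- the Bernoulli moment bookkeeping, the total-variance decomposition, the $\pi_l > 0$ step, and the $0/0$ boundary case --- are routine.
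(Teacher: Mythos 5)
Your proof is correct and follows essentially the same route as the paper: compute the Binomial conditional law of $m_kZ_k$ given $X$, apply the law of total variance to $Var(Z_k\mid Y=l)$, use the regularity hypothesis to kill the between-class component, and observe that the numerator of $\lambda(k)$ vanishes exactly when $g_k(X)=E(\delta(X,U^k)\mid X)$ is a.s.\ constant. What you add is a cleaner intermediate step: extracting the almost-sure identity $g_k(X)=h_k(Y)$ directly from the conditional-independence hypothesis (the paper states the same fact only implicitly as $E(Z_k\mid Y)=E(Z_k\mid Y,X)$), which gives you the closed-form $\lambda(k)=\bigl(\max_l h_k(l)-\min_l h_k(l)\bigr)\big/\max_l\sqrt{h_k(l)(1-h_k(l))/m_k}$, plus an explicit disposal of the degenerate-denominator case ($h_k(l)\in\{0,1\}$ for all $l$), where the paper simply excludes ``trivial graphs.'' These are genuine tightenings of the exposition rather than a different argument.
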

Note that the condition $\delta(X,U^{k})|Y$ is independent of $X|Y$ can also hold for the degree-corrected stochastic block model, as shown in the proof. Since DC-SBM is often a good model for real-world sparse graphs, the designed community score usually performs well in practice. It is important to emphasize that this property holds at the population level, meaning it is expected to perform well for sufficiently large vertex sizes rather than any finite $n$, due to sample estimation variance.

Finally, there exist other alternative statistics that can consistently detect principal communities. For example, as shown in the proof, one could use the numerator of the community score or the variance of $E(Z_k|Y=l)$, both of which also equal zero under the same condition. Nevertheless, the numerator based on order statistics makes it more robust, and the denominator provides effective normalization for ranking and thresholding purposes, making the proposed community score well-behaved and robust in empirical assessments.

\section{Simulations}
We consider three simulated models with $K=20$ and increasing $n$. In each model, vertex label assignment is randomly determined based on prior probabilities: $\pi_{k}=0.25$ for $k=\{1,2,3\}$, then equally likely to be $0.25/(K-3)$ for the remaining classes. 
Given these labels, edge probabilities are generated under each model. Here are the details of the model parameters for each:

\begin{itemize}
\item \textbf{SBM}: Block probability matrix: $B(k,k)=0.2$ for $k=\{1,2,3\}$, and $B(k,l)=0.1$ otherwise. 
\item \textbf{DC-SBM}: Vertex degree generation:
\begin{align*}
\theta_i | (\mathbf{Y}(i)=y) \sim \mbox{Beta}(1,5+y/5).
\end{align*}
Block probability matrix: $B(1,1)=0.9$, $B(2,2)=0.7$, $B(3,3)=0.5$, and $B(k,l)=0.1$ otherwise. 
\item \textbf{RDPG}: Latent variable $U \in \mathbb{R}^4$. For $k=\{1,2,3\}$, 
\begin{align*}
U(:,k) | (Y=k) \sim \mbox{Uniform}(0.2,0.3). 
\end{align*}
For $k >3$, 
\begin{align*}
U(:,k) | (Y>3) \sim \mbox{Uniform}(0.1,0.2).
\end{align*}
For all dimensions $l \neq k$. 
\begin{align*}
U(:,l) | (Y=k) \sim \mbox{Uniform}(0,0.1).
\end{align*}
\end{itemize}
In all three models, the parameter settings are designed such that vertices from the top three communities can be perfectly separated on a population level, and these communities are considered the principal communities, represented as $D=\{1,2,3\}$. On the other hand, vertices from the remaining communities are intentionally designed to be indistinguishable from each other, constituting redundant communities. As a result, $75\%$ of the vertices belong to the principal communities. 


\subsection{Embedding Visualization and Sample Community Score}
Figure~\ref{fig1} shows the adjacency matrix heatmap for one sample realization, the visualization of the resulting principal graph encoder embedding, and the sample community score for each dimension. The first column shows the heatmap of the generated adjacency matrix $\mathbf{A}$. The sample indices are sorted by class to highlight the block structure. All three graphs exhibit a similar block structure, with higher within-class probabilities for the top three communities. The SBM graph is the most dense graph, followed by RDPG, and the DC-SBM graph is the most sparse by design.

The second column presents the sample community scores ${\hat{\lambda}(k)}$ based on the proposed sample method. Clearly, the sample scores for the first three communities / dimensions stand out and are significantly higher than the others. As a result, the proposed method successfully identifies and reveals the ground-truth dimension, setting $\hat{D}=D=\{1,2,3\}$. 

The third column visualizes the principal graph encoder embedding $\mathbf{Z}^{\hat{D}}$. Each dot represents the embedding for a vertex, and different colors indicate the class membership of each vertex, particularly those from the principal communities. Since $\hat{D}=D=\{1,2,3\}$, the embedding is in 3D and occupies the top three dimensions. We observe that the encoder embedding effectively separates the top three communities, while all redundant communities are mixed together and cannot be distinguished, which aligns with the given models. 


\begin{figure*}[ht]
	\centering	\includegraphics[width=0.9\textwidth,trim={0cm 1cm 0cm 1cm},clip]{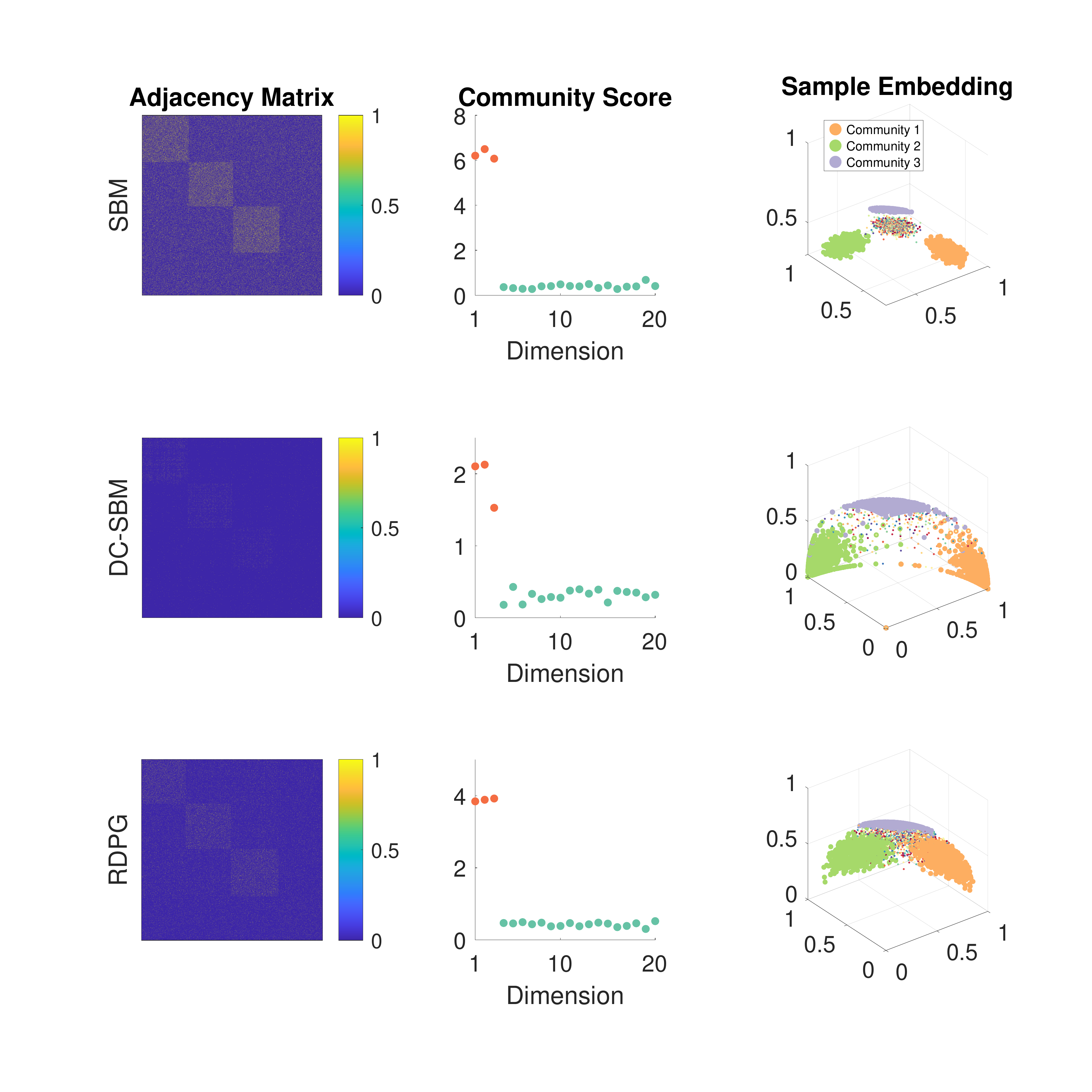}
	\caption{This figure visualizes the sample adjacency matrix for each model at $n=5000$ and $K=20$, the sample community scores for $k \in [1,20]$, and the principal graph encoder embedding.}
	\label{fig1}
\end{figure*}

\subsection{Detection Accuracy and Vertex Classification}
Using the same simulation models, we further assess the capability of the proposed method to identify the ground-truth dimensions and evaluate the quality of the embedding through a classification task on the vertex embedding. For each model, we generate sample graphs with increasing $n$, compute the sample community score, report accuracy in detecting the ground-truth principal communities, compute the principal graph encoder embedding (using training labels only via 5-fold evaluation), apply linear discriminant analysis as the classifier, and report the testing error on the testing vertices. This process is repeated for $100$ Monte-Carlo replicates for each $n$, ensuring that all standard deviations fall within a margin of $1\%$. The average results are reported in Figure~\ref{fig2}.

The first column of Figure~\ref{fig2} shows the sample community scores as $n$ increases. The red line represents the average sample community score among the principal communities, while the blue line represents the average sample community score among the redundant communities. For all three models, as $n$ increases, the principal communities and the redundant communities become increasingly separated. 

This separation translates well to the second column of Figure~\ref{fig2}, which shows that the sample algorithm quickly achieves a true positive rate of $1$ and a false positive rate of $0$ in detecting the true principal communities. In other words, $\text{Prob}(\hat{D}=D) \rightarrow 1$ as $n$ increases. This implies our method is consistent in detecting the ground-truth principal communities.

The third column of Figure~\ref{fig2} evaluates the quality of the graph embedding by conducting a 5-fold vertex classification task on the sample embedding. In each instance, we compute the sample embedding by setting all testing labels to $0$, apply linear discriminant analysis to the embedding and labels of the training vertices, predict the testing labels using the embedding of the testing vertices, and then calculate the error by comparing the predicted label to the true testing label. According to the population model, we can compute that the optimal Bayes error is approximately $0.235$.

As the sample size increases, we observe that the classification error using the principal graph encoder embedding converges to the Bayes-optimal error. The encoder embedding without principal community detection also approaches the Bayes error, though the principal version performs slightly better across all settings.

It is important to note that empirical performance is influenced by both the block structure and the values of $n_k$ within the principal communities. In particular, the DC-SBM simulations are much sparser, requiring larger vertex counts to achieve lower error rates. Additionally, as $n_k$ increases, the classification error decreases accordingly.

\begin{figure*}[ht]
	\centering
	\includegraphics[width=1.0\textwidth,trim={0cm 0cm 0cm 0cm},clip]{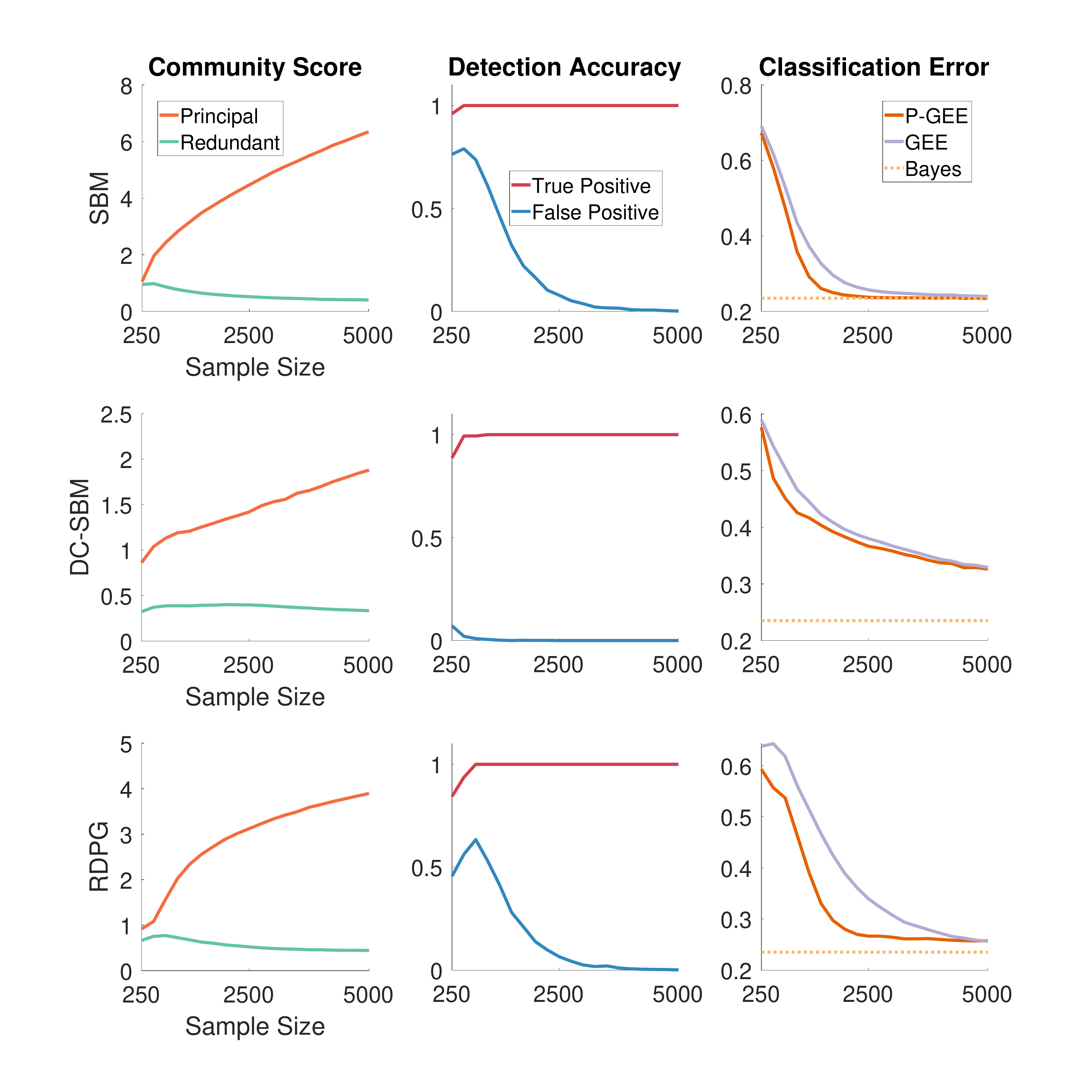}
	\caption{This figure displays the average sample community score, the average principal community detection accuracy, and the average vertex classification error using the embedding, based on $100$ Monte-Carlo replicates with increasing $n$. P-GEE denotes the principal graph encoder embedding, and GEE denotes the original graph encoder embedding.}
	\label{fig2}
\end{figure*}

\section{Real Data}

\subsection{Setting}
We collected a diverse set of real graphs with associated labels from various sources, including the Network Repository\footnote{\url{http://networkrepository.com/}} \cite{nr}, Stanford Network Data\footnote{\url{https://snap.stanford.edu/}}, and internally collected graph data. \if0\blind
{To ensure transparency and reproducibility, we processed all public datasets using MATLAB and have made them available on our GitHub repository.} \fi

Since the ground-truth principal communities are unknown in real graphs, we primarily use vertex classification on embedding as a proxy to evaluate the embedding quality. We compare this to the original graph encoder embedding to assess the quality of the principal community detection. We use 5-fold cross-validation and linear discriminant analysis, and compare the graph encoder embedding (GEE), the principal graph encoder embedding (P-GEE), the adjacency spectral embedding (ASE), and node2vec (N2V). ASE requires an explicit dimension choice, which is set to $d=30$. For node2vec, we use the graspy package \cite{Graspy} with default parameters and $128$ dimensions. Any directed graph was transformed to undirected, and any singleton vertex was removed.

\subsection{Using Original Data}
Table~\ref{table1} summarizes the average error and standard deviation after conducting $100$ Monte Carlo replicates for each given graph. It also provides basic dataset details, including $n$, $K$, and the median dimension choice $|\hat{D}|$ for P-GEE. The numerical results clearly indicate that both GEE and P-GEE deliver excellent performance across all datasets, outperforming spectral embedding in all cases and node2vec in most cases. We also observe that the proposed principal graph encoder embedding is generally very close to the original graph encoder embedding in classification error: by detecting and only retaining the dimensions corresponding to the principal communities, the principal GEE either maintains or slightly improves the classification error compared to the original GEE throughout all data (except the IIP data with $K$ being only $3$). 

This implies that as long as $K$ is not too small, the principal GEE successfully extracts important communities that preserve sufficient label information and improves the classification error. Another observation is that the encoder embedding produces the best error in most cases, and in the two cases where node2vec yields better error than GEE, GEE is also very close in error.

\begin{table*}[ht]
\renewcommand{\arraystretch}{1.3}
\centering
\scalebox{1.0}{
\begin{tabular}{|c|c||c|c|c||c|c|}
 \hline
 & $(n,K)$ & GEE ($\%$) & P-GEE ($\%$) & $|\hat{D}|$ & ASE ($\%$) & N2V ($\%$)\\
\hline
Citeseer  & $(3312,6)$   & $32.8 \pm 0.6$ & $\textbf{32.3} \pm 0.6$ & $4$ & $60.3 \pm 0.5$ & $77.5 \pm 0.5$\\
Cora & $(2708,7)$   & $\textbf{20.9} \pm 1.5$ & $\textbf{20.9} \pm 1.5$ & $5$   & $31.8 \pm 0.6$ & $75.1 \pm 0.5$\\
Email  & $(1005,42)$   & $34.1 \pm 0.8$  & $34.2 \pm 0.8$ & $39$  &  $43.6 \pm 0.4$ & $\textbf{29.2} \pm 0.5$\\
IIP  & $(219,3)$   & $\textbf{31.7} \pm 1.9$ & $32.7 \pm 1.7$ & $2$   & $35.6 \pm 0.4$ & $48.9 \pm 3.2$\\
IMDB  & $(19503,3)$   & $\textbf{1.4} \pm 2.9$ & $\textbf{1.4} \pm 2.9$ & $3$   & $60.1 \pm 0.4$ & $44.8 \pm 0.1$\\
LastFM  & $(7624,18)$   & $20.3 \pm 0.3$ & $20.3 \pm 0.3$ & $17$   & $43.3 \pm 0.4$ & $\textbf{14.7} \pm 0.1$\\
Letter  & $(10507,15)$   & $\textbf{7.4} \pm 0.3$ &  $\textbf{7.4} \pm 0.3$  & $4$   & $89.2 \pm 0.3$ & $74.9 \pm 0.3$\\
Phone & $(1703,71)$   & $30.1 \pm 0.8$ & $\textbf{28.6} \pm 0.8$ & $53$  &  $55.9 \pm 0.2$ & $83.7 \pm 0.5$\\
Protein & $(43471,3)$   & $\textbf{30.8}\pm 0.2$ & $\textbf{30.8}\pm 0.2$ & $3$   & $51.0\pm 0.7$ & $45.8 \pm 0.1$\\
Pubmed & $(19717,3)$   & $\textbf{22.6}\pm 0.2$ & $\textbf{22.6} \pm 0.2$ & $3$   & $35.5\pm 0.7$ & $58.9 \pm 0.2$\\
\hline
\end{tabular}
}
\caption{Vertex classification error using $5$-fold linear discriminant analysis on each graph embedding. The table reports the average error and standard deviation after $100$ Monte Carlo replicates, highlighting the best error within each dataset. All accuracy are in percentile.}
\label{table1}
\end{table*}

\subsection{Using Noisy Data}

To further demonstrate the advantage and robustness of P-GEE, we conducted a noisy data experiment. We used the same real data and evaluation as above, with the exception that the vertex labels were partially polluted. For each replicate, we randomly assigned $10\%$ of the ground-truth vertex labels to one of 30 additional noise classes. For example, suppose $K=3$ and vertex 1 belongs to class 2, so $\mathbf{Y}(1)=2$. If vertex 1 is not polluted, we have $\mathbf{Y}^{noise}(1)=\mathbf{Y}(1)=2$; otherwise, $\mathbf{Y}^{noise}(1) \in [4,5,\ldots,33]$ with equal probability.

We then used the given graph and noisy labels $\mathbf{Y}^{noise}$ to perform vertex embedding and classification for GEE, P-GEE, and ASE, and reported the results in Table~\ref{table2}. Comparing vertex classification accuracy, we observed that P-GEE consistently outperforms GEE and ASE in most cases. In fact, P-GEE is nearly insusceptible to label noise, achieving ideal error rates on the noisy data in most cases. Here, "ideal error" can be defined as the best error on the original data in the corresponding row of Table~\ref{table1}, plus approximately $10\%$. For example, on the IMDB data, the best error on the original data is $1\%$, while P-GEE achieves an error of $10.2\%$ on the noisy data, compared to a much worse error of $28.3\%$ for GEE on the noisy data. Similarly, on the PubMed data, the best error on the original data is $22.6\%$, with P-GEE achieving an error of $32.3\%$ on the noisy data, while GEE on the noisy data has a higher error of $39.0\%$. 

We also observed that the estimated number of principal communities $|\hat{D}|$ accurately matches the true $K$ of the original data in most cases, indicating that the sample community score remains robust and well-behaved in the presence of noisy data.

Finally, we compared the running times of GEE, P-GEE, and ASE, including both embedding and classification time. While the running times for P-GEE and GEE are mostly similar, the reduced dimensionality in P-GEE speeds up subsequent classification, leading to noticeable improvements in running time. Both GEE and P-GEE are significantly faster than spectral embedding for moderate to large graph sizes. For example, for relatively larger graphs, such as LastFM, letter, and protein data, which have tens of thousands of vertices, spectral embedding requires several seconds, whereas GEE only takes a fraction of a second. 

\begin{table*}[ht]
\renewcommand{\arraystretch}{1.3}
\centering
\scalebox{1.0}{
\begin{tabular}{|c||c|c||c|c|c||c|c|}
 \hline
 & GEE ($\%$) & Time (s) & P-GEE ($\%$) & Time (s)& $(K,|\hat{D}|)$ & ASE ($\%$) & Time (s)\\
  \hline
Citeseer   & $49.9 \pm 0.6$ & $0.11$  & $\textbf{41.5} \pm 0.5$ & $0.09$ & $(6,6)$ & $72.9 \pm 0.3$ & $0.16$ \\
 \hline
Cora   & $42.6 \pm 0.6$ & $0.12$   & $\textbf{29.4} \pm 0.5$ & $0.11$ & $(7,7)$ & $61.8 \pm 0.4$ & $0.15$ \\
 \hline
Email & $\textbf{50.5} \pm 0.8$ & $0.38$ & $51.2 \pm 0.9$ & $0.37$ & $(42,60)$ & $50.8 \pm 0.5$ & $0.42$ \\
 \hline
IIP  & $42.8 \pm 2.0$ & $0.02$& $\textbf{39.2} \pm 2.0$ & $0.02$ & $(3,2)$ & $52.3 \pm 2.2$ & $0.03$ \\
\hline
IMDB  & $28.3 \pm 0.4$ & $0.17$ & $\textbf{10.2} \pm 0.05$ & $0.13$ & $(3,3)$ & $69.0 \pm 0.06$ & $0.75$ \\
\hline
LastFM & $42.1 \pm 0.4$ & $0.22$  & $\textbf{30.4} \pm 0.3$ & $0.19$ & $(18,17)$ & $53.0 \pm 0.2$ & $5.9$ \\
\hline
Letter & $29.7 \pm 0.3$  & $0.20$ & $\textbf{21.1} \pm 0.3$ & $0.18$ & $(15,15)$ & $90.8 \pm 0.03$ & $1.1$ \\
\hline
Phone & $48.8 \pm 1.0$  & $0.75$ & $\textbf{41.3} \pm 0.9$ & $0.67$ & $(71,53)$ & $64.1 \pm 0.3$ & $0.70$ \\
\hline
Protein & $42.5 \pm 0.2$ & $0.30$  & $\textbf{38.3} \pm 0.2$ & $0.17$ & $(3,3)$ & $56.6 \pm 0.02$ & $2.4$ \\
\hline
Pubmed& $39.0 \pm 0.3$ & $0.16$  & $\textbf{32.3} \pm 0.2$ & $0.12$ & $(3,3)$ & $44.9 \pm 0.5$ & $0.48$ \\
\hline
\end{tabular}
}
\caption{The evaluation is the same as in Table~\ref{table1}, except that $10\%$ of the given labels are randomized into one of 30 noise groups. We also report the average running time. The best error within the noise columns is highlighted. All accuracy are presented as percentages, and all running times are in seconds.}
\label{table2}
\end{table*}

\section{Conclusion}

In this paper, we propose a Principal Graph Encoder Embedding (P-GEE) algorithm for analyzing a given graph and its associated label vector. P-GEE simultaneously projects the graph into a lower-dimensional space and ranks community importance with respect to the label. We prove that P-GEE preserves conditional density and is Bayes-optimal under the random Bernoulli graph model. Empirically, it demonstrates strong performance in embedding visualization, principal community extraction, vertex classification, and robustness to noisy labels and redundant communities.

Compared to existing methods, such as random walk-based approaches (e.g., node2vec) and neural network-based models like Graph Neural Networks and Graph Autoencoders \cite{kipf2017semi,kipf2016variational,zhou2020graph,Wu2019ACS}, P-GEE offers practical advantages in both interpretability and scalability. Each embedding coordinate $Z(i,k)$ directly represents the edge connectivity between vertex $i$ and community $k$, enabling transparent analysis of community-level influence. This stands in contrast to all other methods, where embedding coordinates lack explicit semantic meaning.

From a computational standpoint, P-GEE is highly efficient due to its non-iterative design. In contrast, spectral methods, node2vec, and Graph Neural Networks rely on iterative procedures, such as singular value decomposition, random walks, or gradient descent, which are computationally expensive and prone to amplifying small noise. We believe the non-iterative structure and explicit exclusion of noisy dimensions make P-GEE more robust, as demonstrated in our noisy data experiments.

While P-GEE performs well across many benchmarks, several future directions remain. First, it is well known that neural network-based methods are sensitive to initialization and may not always perform reliably \cite{oono2020graph,Xu2020What,GCNSpectral}. Simpler approaches such as FastGCN \cite{chen2018fastgcn} and label-based methods \cite{huang2021combining,wang2022combining} have shown competitive performance. In this context, our encoder embedding could serve as a strong initialization scheme or be extended to an iterative, gradient-optimized version, potentially improving performance on complex graph structure, albeit with increased computational cost. Second, modern real-world graphs are increasingly complex. Multiple graphs, multivariate node attributes, overlapping communities, and multiple label vectors are becoming more common. Extending the theoretical framework and algorithmic design of P-GEE to accommodate such scenarios would further enhance its versatility and applicability in real-world settings.

\bibliographystyle{ieeetr}
\bibliography{shen,general}

\begin{IEEEbiography}
[{\includegraphics[width=1in,height=1.25in,clip,keepaspectratio]{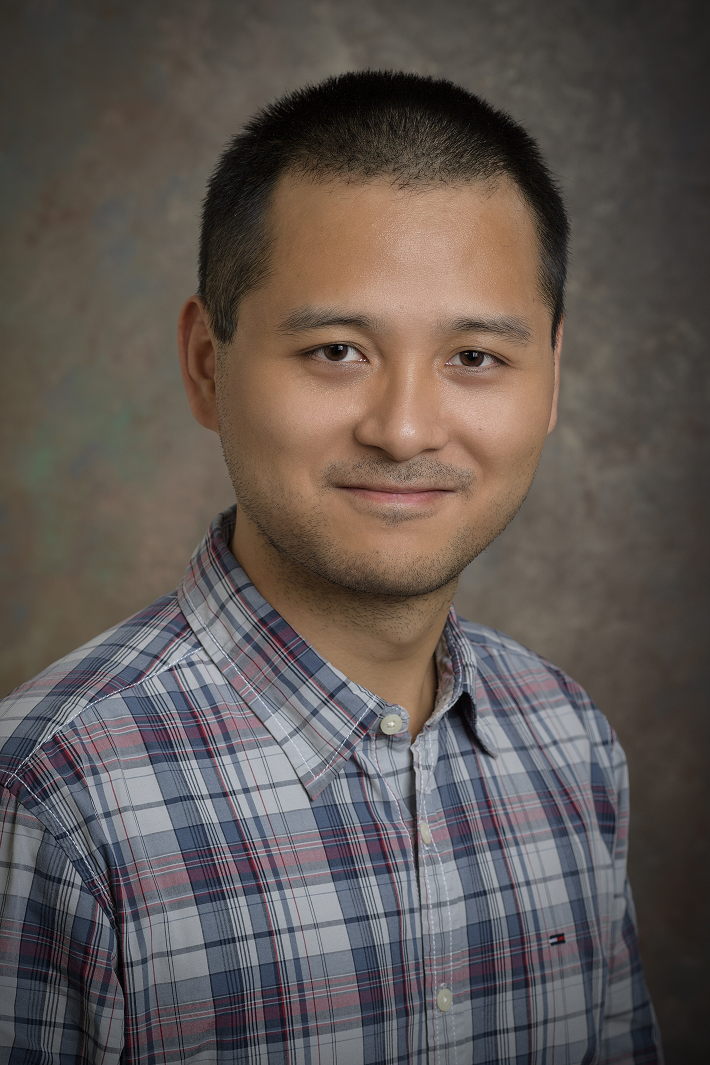}}]{Cencheng Shen received the BS degree in Quantitative Finance from National University of Singapore in 2010, and the PhD degree in Applied Mathematics and Statistics from Johns Hopkins University in 2015. He is an associate professor in the Department of Applied Economics and Statistics at University of Delaware. His research interests include graph embedding, dependence measures, and neural networks.}
\end{IEEEbiography}
\begin{IEEEbiography}
[{\includegraphics[width=1in,height=1.25in,clip,keepaspectratio]{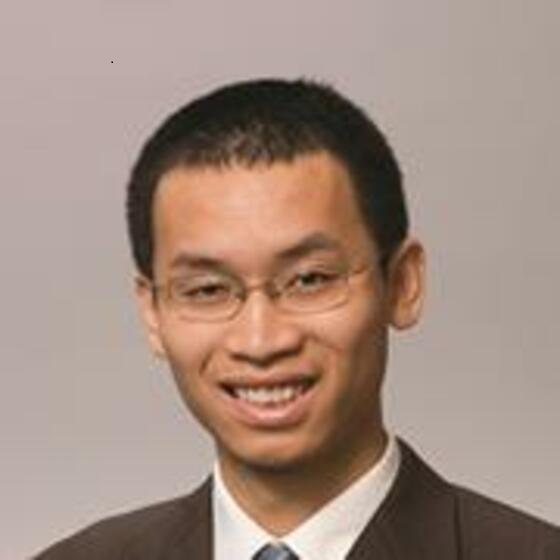}}]{Yuexiao Dong is an Associate Professor from the Department of Statistical Science. Dr. Dong received his Bachelor’s degree in mathematics from Tsinghua University. He obtained his PhD from the statistics department of the Pennsylvania State University in 2009. Dr. Dong’s research focuses on sufficient dimension reduction and high-dimensional data analysis.}
\end{IEEEbiography}
\begin{IEEEbiography}
[{\includegraphics[width=1in,height=1.25in,clip,keepaspectratio]{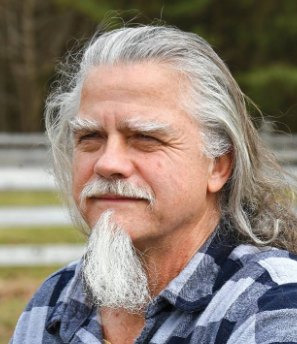}}]{Carey E. Priebe received the BS degree in mathematics from Purdue University in 1984, the MS degree in computer science from San Diego State University in 1988, and the PhD degree in information technology (computational statistics) from George Mason University in 1993. From 1985 to 1994 he worked as a mathematician and scientist in the US Navy research and development laboratory system. Since 1994 he has been a professor in the Department of Applied Mathematics and Statistics at Johns Hopkins University. His research interests include computational statistics, kernel and mixture estimates, statistical pattern recognition, model selection, and statistical inference for high-dimensional and graph data. He is a Senior Member of the IEEE, an Elected Member of the International Statistical Institute, a Fellow of the Institute of Mathematical Statistics, and a Fellow of the American Statistical Association.}
\end{IEEEbiography}
\begin{IEEEbiography}
[{\includegraphics[width=1in,height=1.25in,clip,keepaspectratio]{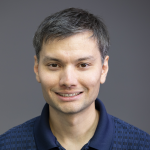}}]{Jonathan Larson is a Partner Data Architect at Microsoft Research working on Special Projects. He currently leads a research team focused on the intersection of graph machine learning, LLM memory representations, and LLM orchestration. Some of his group’s recent research include GraphRAG, organizational science, and cybersecurity. His research has led to shipping new features in Bing, Viva, PowerBI. }
\end{IEEEbiography}
\begin{IEEEbiography}
[{\includegraphics[width=1in,height=1.25in,clip,keepaspectratio]{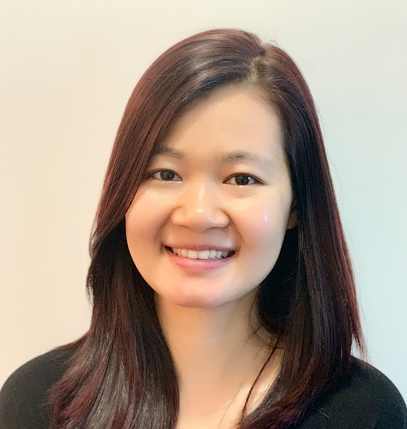}}]{Ha Trinh received a BS degree in Applied Computing from the University of Dundee in 2009 and a PhD degree in Computing from the same university in 2013. She is currently working as a senior data scientist at Microsoft Research. Her research interests lie at the intersection of artificial intelligence and human-computer interaction.}
\end{IEEEbiography}
\begin{IEEEbiography}
[{\includegraphics[width=1in,height=1.25in,clip,keepaspectratio]{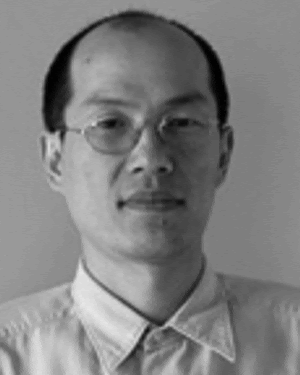}}]{Youngser Park received the B.E. degree in electrical engineering from Inha University in Seoul, Korea in 1985, the M.S. and Ph.D. degrees in computer science from The George Washington University in 1991 and 2011 respectively. From 1998 to 2000 he worked at the Johns Hopkins Medical Institutes as a senior research engineer. From 2003 until 2011 he worked as a senior research analyst, and has been an associate research scientist since 2011 then research scientist since 2019 in the Center for Imaging Science at the Johns Hopkins University. At Johns Hopkins, he holds joint appointments in the The Institute for Computational Medicine and the Human Language Technology Center of Excellence. His current research interests are clustering algorithms, pattern classification, and data mining for high-dimensional and graph data.}
\end{IEEEbiography}

\clearpage
\onecolumn
\setcounter{figure}{0}
\setcounter{theorem}{0}
\renewcommand{\thealgorithm}{C\arabic{algorithm}}
\renewcommand{\thefigure}{E\arabic{figure}}
\renewcommand{\thesection}{A.\arabic{section}}
\renewcommand{\thesubsection}{\thesubsection.\arabic{subsection}}
\renewcommand{\thesubsubsection}{\thesubsection.\arabic{subsubsection}}
\pagenumbering{arabic}
\renewcommand{\thepage}{\arabic{page}}

\bigskip
\begin{center}
{\large\bf APPENDIX}
\end{center}


To facilitate the proof, we introduce the following notations for conditioning and density arguments:
\begin{itemize}
\item We use $\cdot|\mathcal{\vec{U}}$ to denote the conditioning on all the independent variables $U_j$, i.e., for $j=1,2,\ldots,m$, we fix $U_j = u_j$.
\item When conditioning on $(X,Y)=(x,y)$, we simply use $\cdot|(X,Y)$. 
\item We assume $(U,V)$ is an independent copy of $(X,Y)$. Moreover, when conditioning on $(U,V)=(u,v)$, we simply use $\cdot|(U,V)$. 
\item $(a_1, a_2,\ldots, a_m)$ denotes the density argument for each dimension of $A$, and $(z_1, z_2,\ldots, z_K)$ denotes the density argument for each dimension of $Z$. 
\end{itemize}


\thmOne*
\begin{proof}

The proof is decomposed into three parts: 
\begin{itemize}
\item (i) establish $Y | A \stackrel{dist}{=} Y | Z$; 
\item (ii) establish $Y | Z \stackrel{dist}{=} Y | Z^{D}$; 
\item (iii) establish the Bayes error equivalence.\\
\end{itemize}

(i) It suffices to show the following always holds:
\begin{align*}
\text{Prob}(Y | A)= \text{Prob}(Y | Z)
\end{align*}
where $Z = AW$ is the encoder embedding. Given that $Y$ is a categorical variable with prior probabilities $\{\pi_k, k=1,\ldots,K\}$, each conditional probability satisfies
\begin{align*}
\text{Prob}(Y=y | A)&= \frac{\pi_y f_{A|Y=y}(a_1, a_2,\ldots, a_m)}{\sum_{l=1}^{K}\pi_{l}f_{A|Y=l}(a_1, a_2,\ldots, a_m)},\\
\text{Prob}(Y=y | Z)&= \frac{\pi_y f_{Z|Y=y}(z_1, z_2,\ldots, z_K)}{\sum_{l=1}^{K}\pi_{l}f_{Z|Y=l}(z_1, z_2,\ldots, z_K)}.
\end{align*}
Therefore, it suffices to prove that the two numerators are proportional, i.e.,
\begin{align*}
c \times f_{A|Y}(a_1, a_2,\ldots, a_m)= f_{Z|Y}(z_1, z_2,\ldots, z_K)
\end{align*}
for some positive constant $c$ that is unrelated to $Y$.

We begin by examining the conditional density of $A$:
\begin{align*}
f_{A | (Y, X, \mathcal{\vec{U}})}(a_1, a_2,\ldots, a_m) &= \prod_{j=1}^{m} \delta(x,u_j)^{a_j} (1-\delta(x,u_j))^{1-a_j} \\
&= \prod_{k=1}^{K} \prod_{j=1,\ldots,m}^{v_j=k} \delta(x,u_j)^{a_j} (1-\delta(x,u_j))^{1-a_j}.
\end{align*}
The first line follows because each dimension of $A$, under all the conditioning, is independently distributed as a Bernoulli random variable with probability $\delta(x,u_j)$ for $j=1,\ldots,m$. Then the second line rearranges the product based on the class membership of each $v_j$.

We proceed by un-conditioning with respect to $\mathcal{\vec{U}}$, resulting in the following expression:
\begin{align*}
&f_{A | (Y, X)}(a_1,a_2,\ldots,a_m) = \int_{\mathcal{\vec{U}}} f_{A | (Y,X,\mathcal{\vec{U}})}(a_1,\ldots,a_m) f_{\mathcal{\vec{U}}}(u_1,\ldots,u_m)\\
&= \int_{u_1, \ldots, u_m} f_{A | (Y, X, \mathcal{\vec{U}})}(a_1,\ldots,a_m) f_{U_1}(u_1)\cdots f_{U_m}(u_m)\\
&= \int_{u_1, \cdots, u_m} \prod_{k=1}^{K} \prod_{j=1,\ldots,m}^{v_j=k} \delta(x,u_j)^{a_j} (1-\delta(x,u_j))^{1-a_j} f_{U|V=v_1}(u_1)\cdots f_{U|V=v_m}(u_m) \\
& = \prod_{k=1}^{K} \prod_{j=1,\ldots,m}^{v_j=k} E(\delta(x,U)|V=k)^{a_j} (1-E(\delta(x,U)|V=k))^{1-a_j} \\
& = \prod_{k=1}^{K} (\tau_{x,k}(U))^{\sum\limits_{j=1,\ldots,m}^{v_j=k}a_j} (1-\tau_{x,k}(U))^{\sum\limits_{j=1,\ldots,m}^{v_j=k}(1-a_j)}.
\end{align*}
The first line is a standard application of conditional density manipulation, and note that $\mathcal{\vec{U}}$ is independent of $(X,Y)$. The second line rewrites the joint density of $f_{\mathcal{\vec{U}}}$ into individual densities, since the joint density is simply a product of $f_{U|V=v_j}(u_j)$. The fourth line computes the integral: since $u_j$ only appears once in the whole product, either via $\delta(x,u_j)^{a_j}$ or $(1-\delta(x,u_j))^{1-a_j}$ due to $a_j$ taking values of either $0$ or $1$, solving the integral at each $j$ yields either $E(\delta(x,u_j))$ or $(1-E(\delta(x,u_j)))$. Since this expectation is identical throughout the same $v_j$, we can represent this expectation as:
\begin{align*}
\tau_{x,k}(U)=E(\delta(x,U)|V=k).
\end{align*}
This allows us to group terms with the same expectation together based on $k$.

Continuing with the derivation, we un-condition $X$ to derive $f_{A | Y}$:
\begin{align*}
&f_{A | Y}(a_1,a_2,\ldots,a_m) = \int_{x} f_{A | (Y,X)}(a_1,a_2,\ldots,a_m) f_{X|Y}(x)\\
&= \int_{x} \prod_{k=1}^{K} (\tau_{x,k}(U))^{\sum\limits_{j=1,\ldots,m}^{v_j=k}a_j} (1-\tau_{x,k}(U))^{\sum\limits_{j=1,\ldots,m}^{v_j=k}(1-a_j)} f_{X|Y}(x).
\end{align*}

Next, we consider the encoder embedding $Z$. Starting from $f_{Z | (Y, X, \mathcal{\vec{U}})}(z_1, z_2,\ldots, z_K)$, under such conditioning, the density at each dimension $k$ is a Poisson Binomial distribution, i.e.,
\begin{align*}
m_k Z_k | (Y, X, \mathcal{\vec{U}}) \sim \mbox{Poisson Binomial}(\{\delta(x,u_j)\})
\end{align*}
for $j=1,\ldots,m$ and $v_j=k$. After un-conditioning $\mathcal{\vec{U}}$, each probability $\delta(x,u_j)$ again becomes $E(\delta(x,U)|V=k)$ by the same reasoning as above for $f_{A | (Y, X)}$. Therefore,
\begin{align*}
m_k Z_k | (Y, X) \sim \mbox{Binomial}(m_k, \tau_{x,k}(U)).
\end{align*}
As each dimension is conditionally independent, the density of $Z$ is the product of independent Binomials, and we have 
\begin{align*}
f_{Z|(Y, X)}(z_1,z_2,\ldots,z_K)= \prod_{k=1}^{K} {m_k \choose m_k z_k} (\tau_{x,k}(U))^{m_k z_k} (1-\tau_{x,k}(U))^{m_k- m_k z_k},
\end{align*}
and
\begin{align*}
&f_{Z|Y}(z_1,z_2,\ldots,z_K) = \int_{x} f_{Z|(Y, X)}(z_1,z_2,\ldots,z_K) f_{X|Y}(x)\\
&= \int_{x} \prod_{k=1}^{K} {m_k \choose m_k z_k} (\tau_{x,k}(U))^{m_k z_k} (1-\tau_{x,k}(U))^{m_k- m_k z_k} f_{X|Y}(x) \\
&= \prod_{k=1}^{K} {m_k \choose m_k z_k} \int_{x} \prod_{k=1}^{K}(\tau_{x,k}(U))^{m_k z_k} (1-\tau_{x,k}(U))^{m_k- m_k z_k} f_{X|Y}(x),
\end{align*}
where the third line follows because $(m_k,z_k)$ are not affected by the integration of $x$.

Observe that the encoder embedding enforces $m_k z_k=\sum_{j=1,\ldots,m}^{v_j=k}a_j$ for each $k$. Comparing $f_{Z|Y}$ to $f_{A|Y}$, we immediately have 
\begin{align*}
c \times f_{A|Y}(a_1, a_2,\ldots, a_m)= f_{Z|Y}(z_1, z_2,\ldots, z_K)
\end{align*}
when $Z=AW$, where $c=\prod_{k=1}^{K} {m_k \choose m_k z_k}$ is a positive constant.

This conditional density equality holds regardless of the underlying $(X,\mathcal{\vec{V}})$ or $\delta(\cdot, \cdot)$. Hence, we have $Y | A \stackrel{dist}{=} Y | Z$ for the encoder embedding.\\

(ii) Without loss of generality, let us assume that $D=\{1,2,\ldots,d\}$, and $d \in [1,K)$. This means the first $d$ communities are the principal communities, and the remaining are redundant communities. The trivial cases that $d=0$ or $d=K$ will be addressed at the end

Recall the expression from part (i) above:
\begin{align*}
f_{Z|Y}(z_1,z_2,\ldots,z_K) = \prod_{k=1}^{K} {m_k \choose m_k z_k} \int_{x} \prod_{k=1}^{K}(\tau_{x,k}(U))^{m_k z_k} (1-\tau_{x,k}(U))^{m_k- m_k z_k} f_{X|Y}(x).
\end{align*}
This leads to:
\begin{align*}
\text{Prob}(Y=y | Z)&= \frac{\pi_y f_{Z|Y=y}(z_1, z_2,\ldots, z_K)}{\sum_{l=1}^{K}\pi_{l}f_{Z|Y=l}(z_1, z_2,\ldots, z_K)}\\
& =\frac{\pi_y  \prod_{k=1}^{K} {m_k \choose m_k z_k} \int_{x} \prod_{k=1}^{K}(\tau_{x,k}(U))^{m_k z_k} (1-\tau_{x,k}(U))^{m_k- m_k z_k} f_{X|Y=y}(x)}{\sum_{l=1}^{K}\pi_{l} \prod_{k=1}^{K} {m_k \choose m_k z_k} \int_{x} \prod_{k=1}^{K}(\tau_{x,k}(U))^{m_k z_k} (1-\tau_{x,k}(U))^{m_k- m_k z_k} f_{X|Y=l}(x)}\\
& =\frac{\pi_y  \int_{x} \prod_{k=1}^{K}(\tau_{x,k}(U))^{m_k z_k} (1-\tau_{x,k}(U))^{m_k- m_k z_k} f_{X|Y=y}(x)}{\sum_{l=1}^{K}\pi_{l} \int_{x} \prod_{k=1}^{K}(\tau_{x,k}(U))^{m_k z_k} (1-\tau_{x,k}(U))^{m_k- m_k z_k} f_{X|Y=l}(x)}
\end{align*}
where 
\begin{align*}
\tau_{x,k}(U)=E(\delta(x,U^{k})).
\end{align*}

We first look at the terms from community $K$, which is assumed the redundant community. From the definition of redundant community, we have
\begin{align*}
\tau_{x,K}(U)=E(\delta(x,U^{K}))=c_K
\end{align*}
for all possible $x$ where $f_{X}(x)>0$, where $c_K$ is a constant unrelated to $x$. Consequently, all terms involving $\tau_{x,K}(U)$ can be taken outside of the integral in both numerator and denominator, and the same holds for terms associated with $\tau_{x,k}(U)$ for each $k=d+1,\ldots,K$. In essence, for any $l \in [1,K]$, we always have
\begin{align*}
&\pi_l \int_{x} \prod_{k=1}^{K}(\tau_{x,k}(U))^{m_k z_k} (1-\tau_{x,k}(U))^{m_k- m_k z_k} f_{X|Y=l}(x) \\
&=(\prod_{k=d+1}^{K} c_{k}^{m_k z_k}(1-c_{k})^{m_k-m_k z_k}) \pi_l\int_{x} \prod_{k=1}^{d}(\tau_{x,k}(U))^{m_k z_k} (1-\tau_{x,k}(U))^{m_k- m_k z_k} f_{X|Y=l}(x).
\end{align*}
It follows that
\begin{align*}
\text{Prob}(Y=y | Z)=\frac{\pi_y  \int_{x} \prod_{k=1}^{d}(\tau_{x,k}(U))^{m_k z_k} (1-\tau_{x,k}(U))^{m_k- m_k z_k} f_{X|Y=y}(x)}{\sum_{l=1}^{K}\pi_{l} \int_{x} \prod_{k=1}^{d}(\tau_{x,k}(U))^{m_k z_k} (1-\tau_{x,k}(U))^{m_k- m_k z_k} f_{X|Y=l}(x)},
\end{align*}
which exclusively pertains to dimensions corresponding to the principal communities. It is evident that:
\begin{align*}
\text{Prob}(Y=y | Z^{D}) = \text{Prob}(Y=y | Z) =\text{Prob}(Y=y | A) 
\end{align*}
Hence, the principal graph encoder embedding satisfies $Y | A \stackrel{dist}{=} Y | Z^{D}$.

Regarding the two trivial cases: when $d=K$, implying that all communities are principal communities, the theorem trivially holds since no additional dimension reduction occurs. When $d=0$, there is no principal community and $D$ is empty. In this scenario, we have 
\begin{align*}
\text{Prob}(Y=y|A)=\text{Prob}(Y=y|Z) = \pi_y = \text{Prob}(Y=y),
\end{align*}
indicating that $A$ and $Y$ are independent, and $Z$ and $Y$ are independent as well. In other words, the graph provides no information for predicting $Y$, so the graph data itself is redundant.\\

(iii) Given two random variables $(X, Y)$ where $Y$ is categorical, the Bayes optimal classifier for using $X$ to predict $Y$ is
\begin{align*}
g(X) = \arg\max_{k=1,\ldots,K} \text{Prob}(Y = k \mid X).
\end{align*}
By the conditional density equivalence in parts (i) and (ii), it is immediate that the Bayes optimal classifier for using $A$ to predict $Y$ satisfies
\begin{align*}
g(A) &= \arg\max_{k} \text{Prob}(Y = k \mid A) \\
&= \arg\max_{k} \text{Prob}(Y = k \mid Z) = g(Z) \\
&= \arg\max_{k} \text{Prob}(Y = k \mid Z^{D}) = g(Z^{D}).
\end{align*}
Therefore, the Bayes optimal classifier for predicting $Y$ is the same, regardless of whether the underlying random variable is $A$, $Z$, or $Z^{D}$. Since the optimal classifier is always the same, the resulting optimal error is also the same.
\end{proof}

\thmTwo*
\begin{proof}



(i) We first prove that the required condition, $\delta(X,U^{k})|Y$ being independent of $X|Y$, can be satisfied under the stochastic block model (SBM).

Recall that the standard stochastic block model satisfies:
\begin{align*}
\mathbf{A}(i,j) &\sim \mbox{Bernoulli}(B(\mathbf{Y}(i), \mathbf{Y}(j))),
\end{align*}
which, when cast into the framework of a random Bernoulli graph, is equivalent to:
\begin{align*}
\delta(X,U^{k}) | (Y=y) = B(y, k),
\end{align*}
where $B(y, k)$ is a constant and, therefore, always independent of $X|(Y=y)$.

This condition can also hold under the more general degree-corrected stochastic block model (DC-SBM), with an additional assumption regarding how the degree parameters are generated. Under DC-SBM, we have:
\begin{align*}
\delta(X,U^{k}) | (Y=y) = \theta \theta' B(y, k),
\end{align*}
where $\theta$ and $\theta'$ are the degrees for $X$ and $U^{k}$, respectively. Clearly, $\theta' B(y, k)$ is independent of $X|(Y=y)$. By further assuming that the degree variable $\theta$ is generated independently of $X|(Y=y)$, $\delta(X,U^{k})|(Y=y)$ becomes independent of $X|(Y=y)$ under DC-SBM.

(ii) Next, we prove that under the condition that $\delta(X,U^{k})|Y$ is independent of $X|Y$, the population community score $\lambda(k)$ equals $0$ if and only if community $k$ is a redundant community.

From the definition of principal community and the population community score, we need to prove two things. First, we shall prove that
\begin{align*}
Var\left(E(\delta(X, U^{k}) \mid X)\right) = 0
\end{align*}
if and only if
\begin{align*}
Var(E(Z_k|Y=l))=0.
\end{align*}
This is because when the above conditional variance equals 0, $E(Z_k|Y=l)$ is a constant across different $l$, which makes the numerator of the population community score always 0. 

From the proof of Theorem~\ref{thm1}, we have:
\begin{align*}
m_k Z_k | (Y=l, X=x) \sim \mbox{Binomial}(m_k, E(\delta(x,U^{k}))).
\end{align*}
From this, we can derive the conditional expectations as follows:
\begin{align*}
E(Z_k | (Y, X=x)) &= E(\delta(x,U^{k})),\\
E(Z_k | Y=l) &= E(\delta(X,U^{k})|Y=l).
\end{align*}
When community $k$ is redundant such that
\begin{align*}
Var(E(\delta(X, U^{k})|X))=0,
\end{align*}
we immediately have
\begin{align*}
Var(E(Z_k|Y=l))=Var(E(\delta(X,U^{k})|Y=l))=0.
\end{align*}
This is because when the conditional variance equals 0 for all possible $X$, it must also be $0$ when conditioning on $Y=l$, which restricts to part of the support of $X$. This proves the only if direction.

To prove the if direction, we need the additional assumption that $\delta(X,U^{k})|Y$ is independent of $X|Y$. Given such conditional independence, and the fact that $Z_k|Y$ is a random variable with parameter $m_k$ and $E(\delta(X,U^{k}))$, we immediately have that $Z_k|Y$ is independent of $X|Y$, which implies $E(Z_k | Y=l)=E(Z_k | (Y=l, X))$. Therefore, when $Var(E(Z_k | Y=l))=0$, we also have $Var(E(Z_k | (Y=l, X)))=0$. Since $E(Z_k | (Y=l, X=x)) = E(\delta(x,U^{k})) = E(\delta(X,U^{k})|X=x)$, which intuitively means that conditioning on $Y$ is redundant once $X$ is known, this leads to $Var(E(Z_k|X))=Var(E(\delta(X,U^{k})|X))=0$.

(iii) Part (ii) proved that the numerator of the population community score equals $0$ if and only if community $k$ is a redundant community. To complete the proof, it remains to show that the denominator of the population community score is greater than $0$; otherwise, a $0/0$ problem could arise when community $k$ is redundant.

Based on the binomial distribution of $Z_k$, we have:
\begin{align*}
Var(Z_k | (Y, X)) &= \frac{E(\delta(x,U^{k}))(1-E(\delta(x,U^{k})))}{m_k} > 0,
\end{align*}
which always holds, regardless of whether $k$ is redundant or not, except in the trivial case where $\delta(x,U^{k}) = 0$ or 1 almost surely. This corresponds to the scenario where the graph adjacency matrix is entirely 0s or 1s, making all communities redundant and reducing vertex classification to random guessing. Excluding such trivial cases, we have:
\begin{align*}
Var(Z_k | Y=l) &= E_{X}(Var(Z_k | (Y, X)))+Var_{X}(E(Z_k | (Y, X))) > 0,
\end{align*}
where the first term is positive, and the second term is non-negative.

Thus, excluding trivial graphs, the denominator of the principal community score is always positive, regardless of whether community $k$ is redundant or principal.
\end{proof}

\end{document}